\newtheorem{remark}{Remark}
\newtheorem{definition}{\bf Definition}
\newtheorem{problem}{\bf Problem}
\newtheorem{proposition}{\bf Proposition}
\def \F{\mathbf{F}}
\def \G{\mathbf{G}}
\def \U{\mathbf{U}}
\def \s{\mathbf{s}}
\def \hrz{\text{hrz}}
\begin{document}

\title{\bf Synthesis of Temporally-Robust Policies for  \\Signal Temporal Logic Tasks using Reinforcement Learning\\

\thanks{This work was supported by the National Natural Science Foundation of China (62173226,62061136004).}
}

\author{Siqi Wang, Shaoyuan Li, Li Yin, Xiang Yin 
\thanks{Siqi Wang, Shaoyuan Li, and Xiang Yin are with Department of Automation and Key Laboratory of System Control and Information Processing, Shanghai Jiao Tong University, Shanghai 200240, China. 
Li Yin is with the Institute
of Systems Engineering, Macau University of Science and Technology, Taipa, Macao SAR, China.
 E-mail: \tt\small $\{$sq\_wang, syli, yinxiang$\}$@sjtu.edu.cn}
}

\maketitle
\thispagestyle{empty}
\pagestyle{empty}
\setlength{\abovecaptionskip}{0pt}
\setlength{\belowcaptionskip}{0pt}
\setlength{\textfloatsep}{6pt}
\begin{abstract}
This paper investigates the problem of designing control policies that satisfy high-level specifications described by signal temporal logic (STL) in unknown, stochastic environments. 
While many existing works concentrate on optimizing the spatial robustness of a system, our work takes a step further by also considering \emph{temporal robustness} as a critical metric to quantify the tolerance of time uncertainty in STL. 
To this end, we formulate two relevant control objectives to enhance the temporal robustness of the synthesized policies. The first objective is to maximize the probability of being temporally robust for a given threshold. The second objective is to maximize the worst-case spatial robustness value within a bounded time shift. We use reinforcement learning to solve both control synthesis problems for unknown systems. Specifically, we approximate both control objectives in a way that enables us to apply the standard Q-learning algorithm. Theoretical bounds in terms of the approximations are also derived. We present case studies to demonstrate the feasibility of our approach.
\end{abstract}

\vspace{-3pt}
\section{Introduction} 
\vspace{-2pt}
Autonomous systems operating in dynamic environments face the challenge of making complex real-time decisions. These systems, known as \emph{time-critical systems}, must process real-time information to achieve their goals, and the accuracy of their decisions is crucial, especially with temporal constraints involved. For example, an automated guided vehicle may need to retrieve a workpiece within $10$ minutes and return it within $20$ minutes. However, an ad-hoc approach for real-time decision-making may lead to errors. Consequently, ensuring formal guarantees for real-time systems has gained focus in recent years.

Signal temporal logic (STL) is a formal specification language used to describe high-level temporal behaviors of continuous signals. It extends metric temporal logic for real-time systems by incorporating real-valued predicates on signals \cite{maler2004monitoring,donze2013efficient,yu2024model}. One major advantage of STL is its ability to provide both Boolean satisfaction and quantitative measures, termed spatial robustness degree. This unique feature has led to the spreading use of STL in cyber-physical systems, including autonomous robots \cite{silano2021power}, process control systems \cite{farahani2018formal}, smart cities \cite{ma2021novel} and self-driving vehicles \cite{hekmatnejad2019encoding}.

The spatial robustness essentially quantifies the  satisfaction of STL tasks based on value changes in the predicate function. However, practical scenarios also involve signal delays or ahead-of-time occurrences during online executions, which necessitates exploring STL satisfaction under signal time uncertainty. To tackle this, \emph{temporal robustness} was introduced in the literature, which quantifies the maximum left or right time shift a signal trajectory can endure to maintain satisfaction or violation of an STL specification \cite{donze2010robust}. In \cite{rodionova2022combined}, the authors merged left and right temporal robustness and solved control synthesis. 
In \cite{rodionova2022temporal, lindemann2022temporal,yu2023efficient}, the temporal robustness was further investigated for synchronization issues of multi-dimensional signals. Additionally, there are other measures for quantifying the robust satisfaction of STL formulae under time uncertainty, such as conformance \cite{deshmukh2017quantifying} or AverageSTL robustness \cite{akazaki2015time,lin2020optimization}.

In open or reactive environments, synthesizing control sequences or policies to ensure the satisfaction of STL tasks is a major challenge. To tackle this, various synthesis methods have emerged, including encoding the STL satisfaction as constraints in a mixed-integer linear program (MILP) \cite{raman2014model,kurtz2022mixed,sun2022multi,yu2023model} and capturing the satisfaction regions of the STL formula using control barrier functions (CBFs)\cite{lindemann2018control,lindemann2020barrier,xiao2021high}. 
These approaches assume system knowledge; yet in many applications, the system's dynamic is unknown, and trajectories can only be generated through interactions. 
Thus, reinforcement learning techniques are employed for STL task control synthesis; see, e.g., \cite{aksaray2016q,balakrishnan2019structured,kalagarla2021model,venkataraman2020tractable,ikemoto2022deep}. In \cite{aksaray2016q}, Q-learning was applied to maximize the expected robust degree for an unknown system modeled by Markov decision processes (MDPs). Moreover, \cite{venkataraman2020tractable} offered more efficient MDP construction methods.

The focus of the aforementioned work is to enhance spatial robustness of control policies. However, in the context of temporal robustness for time-uncertain systems, synthesis methods are only available in recent works such as \cite{rodionova2021time, buyukkocak2022temporal, rodionova2022combined, rodionova2022temporal}. These approaches have a common feature, which is to extend the MILP-based approach by encoding the temporal robustness using new variables. Similar to the issue of spatial robustness, these approaches rely on the system model. To the best of our knowledge, optimizing STL task temporal robustness in systems with unknown dynamics remains unexplored.

In this paper, we address the challenge of control policy synthesis for unknown stochastic systems to achieve STL tasks. Unlike previous work focusing on spatial robustness, we extend our focus to include temporal robustness. Specifically, we tackle two control synthesis problems. Firstly, we aim to synthesize a policy that maximizes the probability of the trajectory's temporal robustness exceeding a set threshold. Secondly, we aim to maximize the expected worst-case spatial robustness value of the system's trajectories under time uncertainty. To tackle both problems, we apply reinforcement learning techniques, specifically Q-learning method, for MDPs. Our approach draws inspiration from \cite{aksaray2016q}, who uses a $\tau$-MDP structure for reinforcement learning for spatial robustness. Here we reformulate the two problems and select appropriate augmented horizons to handle temporal robustness metrics. We also establish a formal connection between the original problems and the reformulated problems. Our experimental results show that the synthesized policies can effectively enhance the temporal robustness of the system.

\section{Preliminaries}\label{sec:Pre}

This section reviews some basic concepts of signal temporal logic and reinforcement learning of unknown MDPs.

\subsection{Signal Temporal Logic Basics}

Signal temporal logic (STL) is a formal language used for specifying  temporal properties  for real-time systems. 
It is evaluated over dense-time signals in continuous metric space $\mathbb{R}^m$. The syntax of STL is recursively defined as follows \cite{maler2004monitoring}:
\begin{equation}\label{def:STL-general}
    \phi ::= \textsf{true} \mid \mu\mid \neg\phi \mid \phi_1 \land \phi_2 \mid \phi_1\U_{[a,b)}\phi_2, 
\end{equation}
where $\mu:\mathbb{R}^m\to \{\textsf{true},\textsf{false}\}$ is an  atomic predicate
such that it is satisfied when the value of the associated predicate function $h^\mu(s)>0$, where $s\in\mathbb{R}^m$; 
$\neg$ and $\wedge$ are the standard Boolean operators, negation and conjunction respectively, and   
$\U_{[a,b)}$  is the temporal operator ``until" with $a<b$ and $a,b \in \mathbb{N}$. Furthermore, one can  induce temporal operators:
\begin{itemize}
    \item 
    ``eventually" by $\F_{[a,b)}\phi := \textsf{true} \U_{[a,b)}\phi$; and  
    \item 
    ``always" by $\G_{[a,b)}\phi := \neg \F_{[a,b)}\neg \phi$.
\end{itemize}

\begin{definition}[Spatial Robustness of STL] 
Let $\phi$ be an STL formula,  $\s=s_0s_1\cdots$ be a signal and $t\in \mathbb{N}$ be a time instant. The \emph{spatial robustness} of $\phi$ w.r.t.\  $\s$ at time $t$, denoted by $\rho(\phi,\s,t)$, is defined recursively by
\begin{equation}
    \begin{array}{lll}
        \rho(\textsf{true},\s,t) &=& +\infty, \\
      \rho(\mu,\s,t) &=& h^\mu(s_t), \\
        \rho(\neg \phi,\s, t) &=& -\rho(\phi,\s,t),\\
        \rho(\phi_1\land\phi_2,\s,t)&=& \min\left(\rho(\phi_1,s,t),\rho(\phi_2,s,t)\right),\\ 
        \rho(\phi_1\U_{ [a,b)}\phi_2,\s,t)&=&\max_{t'\in[a+t,b+t)}\min\{\rho(\phi_2,\s,t'),\\ & &\min_{t''\in[t,t')}\rho(\phi_1,\s,t'')\}. 
        \vspace{-8pt}
    \end{array}
\end{equation}
\end{definition}

The Boolean semantic of STL is a special instance of the spatial robustness semantic. 
Let $\phi$ be an STL formula, $\s=s_0s_1\cdots$ be a signal and $t\in \mathbb{N}$ be a time instant. We say $\phi$ is  satisfied by $\s$ at $t$
, denoted by $\s[t]\models \phi$ if its robust value is  larger than zero, i.e., 
\vspace{-3pt}
\begin{equation}
\s[t]\models \phi \Leftrightarrow  \rho(\phi,\s,t)>0.
\label{eqn:boolean}
\end{equation}\vspace{-3pt}
We also define the \emph{characteristic function} of the STL formula $\phi$ w.r.t.\  signal $\s$ at time instant $t$ by  
\begin{align} 
	\mathcal{X}(\phi,\s,t) = 
		\left\{
		\begin{array}{rl}
			1 &  \text{if}\quad \s[t] \models \phi  \\
			-1&  \text{otherwise}
		\end{array}
		\right.   .
\end{align} 

For any STL formula, its satisfaction as well as the robust degree can be completely determined within its \emph{horizon} denoted by  $\hrz(\phi)$, which can be computed as the maximum sum of the time interval bound of all nested temporal operators; see, e.g.,\cite{aksaray2016q}.
 
Hereafter in this work, we will restrict our attention to the following fragment of STL formulae:
\begin{equation}\label{def:STL-frag}
    \Phi::=\F_{[0,H)}\phi \mid \G_{[0,H)}\phi,  
\end{equation}
where $\phi$ is a general STL formula  in Equation~\eqref{def:STL-general} and $H\in \mathbb{N}$ is a time instant.
\vspace{-3pt}
\subsection{Temporal Robustness of STL}
\vspace{-2pt}
The spatial robustness semantic quantifies the extent of STL satisfaction based on  predicate function value changes. 
However, it cannot capture the robust satisfaction of a formula concerning time shifts. 
To address this, \emph{temporal robustness} was introduced in \cite{donze2010robust} to quantify the maximum left or right time shift a signal trajectory can bear to maintain satisfaction or violation of an STL specification. 

\begin{definition}[Temporal Robustness of STL, \cite{donze2010robust}]
Let $\phi$ be an STL formula,  $\s=s_0s_1\cdots$ be a signal and $t\in \mathbb{N}$ be a time instant.
The left (respectively, right) temporal robustness, denoted by $\theta^-(\phi,\s,t)$ (respectively, $\theta^+(\phi,\s,t)$), is defined as the maximum left (respectively, right) time shift for a signal $\s$ to maintain the satisfaction or the violation of  STL formula $\phi$. Formally, we have
\begin{equation}
\small
\begin{array}{ll}
    \theta^+(\phi,\s,t)&=
    \mathcal{X}(\phi,\s,t)\cdot
    \max\left\{
    d \,\middle\vert\,\!\!\!\!\!\!\!\!\!
    \begin{array}{cc}
         &   \forall t'\in [t,t+d] \\
         &  \mathcal{X}(\phi,\s,t')=\mathcal{X}(\phi,\s,t)
    \end{array}\!\!
    \right\}\\
     \theta^-(\phi,\s,t)&=
    \mathcal{X}(\phi,\s,t)\cdot
    \max\left\{
    d \,\middle\vert\,\!\!\!\!\!\!\!\!\!
    \begin{array}{cc}
         &   \forall t'\in [t-d,t] \\
         &  \mathcal{X}(\phi,\s,t')=\mathcal{X}(\phi,\s,t)
    \end{array}
    \right\}
\end{array}    
\end{equation}

\end{definition}

\begin{remark}

In the provided definition, when signal shifts result in undefined states, we fill these states with the initial or final values.
For example, for all $t<0$, we have $s_t:=s_0$.    
\end{remark}

\begin{remark}
For simplicity, we only consider left robustness here, given that right shifts or time-delays are more common in real-world scenarios. Hereafter, we will simply use the terminology "temporal robustness" denoted as $\theta(\cdot)$, to stand for left temporal robustness $\theta^-(\cdot)$.
\end{remark}

\vspace{-3pt}
\subsection{Reinforcement Learning for MDPs}\label{sec:RL}
We model the underlying dynamic system by a Markov decision process. 
Formally, an MDP is a  tuple $M=\left(\Sigma, s_0,A, P, R\right)$, 
where $\Sigma$ is the state space, $s_0$ is the initial state, and $A$ is the action space,  $P: \Sigma\times A\times\Sigma\to [0,1]$ is a transition probability function and $R:\Sigma\to \mathbb{R}$ is the reward function. We assume that the state space and action space are known, but the transition probability is \emph{unknown}.  
For simplicity, we assume the initial state is unique; however, this can easily be extended to the scenario with an initial distribution.
  
Given MDP $M$, a (stationary) control policy is a function $\pi: \Sigma \times A\to [0,1]$ which assigns probabilities to actions at each state such that for all $s\in \Sigma$, $\sum_{a\in A}\pi(s,a)=1$. The objective of reinforcement learning is to synthesize a control policy that  maximizes the total sum of discounted rewards through simulation data \cite{sutton2018reinforcement}, i.e., 
\vspace{-5pt}\begin{equation}
 \pi^* = \arg\max\limits_{\pi}\mathbb{E}\sum_{t=0}^{T}\gamma^t R_t,    
\end{equation}
where $R_t$ is the random variable for the reward at instant $t$ when the agent applies policy $\pi$ and $\gamma\in[0,1]$ is a discount factor that balances future rewards.

\emph{Q-learning} is a prominent algorithm in reinforcement learning 
to achieve the above objective without knowing the transition probability. It is a model-free, off-policy and temporal-difference method utilizing a Q-table to hold values for state-action pairs.
At each instant $t$, with $\alpha$ being the learning rate, we use the one-step transition data $(s_t,a_t,r_t,s_{t+1})$  to update the Q-table according to the Bellman Equation as follows:
\begin{equation}\small
     Q(s_t,a_t) := (1-\alpha)Q(s_t,a_t)
    + \alpha[r_t+\gamma \max\limits_{a\in A} Q(s_{t+1},a)].
\end{equation}

\section{Problem Formulation}\label{sec:problem}
We define two optimality metrics for the synthesized control policy by taking the temporal robustness value into account and formulate the control synthesis problems. 

\subsection{Case of Guaranteed Temporal Robustness}
In the first approach, we consider $\delta>0$ as the minimum required value for the temporal robustness of the signal. 
Only signals with temporal robustness greater than or equal to $\delta$
are considered ``robust-enough" signals. 
The following first problem aims to maximize the probability of robust-enough signals. 
\begin{problem}\textbf{(Maximizing Probability with Temporal Robustness Guarantees):} \label{problem:1}
Let
$M$ be an MDP with unknown $P$ 
and $\Phi$ be an STL formula in form of Equation~\eqref{def:STL-frag} with time horizon $\hrz(\Phi)=T+1$. Given $\delta$, find a policy $\pi_1^*$ that maximizes the probability of generating ``robust-enough'' trajectories. That is 
\vspace{-5pt}
\begin{equation}\label{eq:first-prob}
\pi_1^{*}=\arg \max_{\pi}\text{Pr}^\pi[\theta (\Phi,\s_{0:T} )\geq \delta], 
\end{equation} 
where $\Pr^\pi(\cdot)$ denotes the probability under $\pi$.
\end{problem}

Note that, using indicator function $I(\cdot)$,  Equation~\eqref{eq:first-prob} can be expressed equivalently as 
\begin{equation}\label{eq-reform}
     {\pi_1^{*}}=\arg \underset{\pi }{\mathop{\max }}\,{{\mathbb{E}}^{\pi }}[I(\theta (\Phi,\s_{0:T} )\geq \delta )]. 
\end{equation}

\subsection{Case of Spatial-Temporal Robustness}
Note that, the above problem formulation does not consider spatial details within signal values.  To address this, we extend it to a spatial-temporal robustness joint optimization problem. In this formulation, our goal is twofold: maximizing the satisfaction probability under time uncertainty and enhancing the extent of satisfaction.

Specifically, we consider $\delta>0$ as an upper bound of possible time shifts. 
Then we define the worst-case spatial robustness value with time shifts bounded by $\delta$ by
\begin{equation}
\rho_\delta(\Phi,\s_{0:T}):=\min\{\rho(\Phi,\s_{0:T},d)\mid d\leq \delta\}.
\end{equation}
Our objective is to maximize the expectation of such worst-case spatial robustness value. 

\begin{problem}\label{problem:worst}
\textbf{(Maximizing Expectation of Spatio-Temporal Robustness):} Under the same setting as Problem~\ref{problem:1}, find a control policy $\pi_2^*$ such that
\begin{equation}
     \pi_2^*=\arg\max\limits_\pi \mathbb E^\pi\left[\rho_\delta(\Phi,\s_{0:T})\right].
\end{equation}
\end{problem}


\section{Reinforcement Learning for Temporal Robustness}\label{sec:Method}

In this section, we reshape the previous problems into RL-friendly ones. 
Specifically, we approximate the original objective functions in summation-based forms for which standard Q-learning algorithms can be applied. 
Theoretical bounds between the original problems and their approximations are established.

\subsection{Construction of $\tau$-MDPs}
As noted by \cite{aksaray2016q}, for STL spatial robustness, the standard Q-learning algorithm cannot be applied to the original MDP since the reward function is non-Markovian. 
The same issue also exists for temporal robustness, and time shifts further add complexity.  
To address this, we use the $\tau$-MDP structure proposed in \cite{aksaray2016q}, which essentially augments states in the original MDP by a sequence of states of length $\tau$.

\begin{definition}[$\tau$-MDP, \cite{aksaray2016q}] 
Given an MDP $M=\left(\Sigma,s_0,A,P,R\right)$ and a positive integer $\tau\in\mathbb{N}$, its associated  $\tau$-MDP is a tuple
$
M^\tau=\left(\Sigma^\tau,s^\tau_0,A,P^\tau,R^\tau\right)
$, where
    \begin{itemize} 
    \item $\Sigma^\tau \subseteq \left(\Sigma \right)^\tau$ is the set of states.
    \item $s^\tau_0$ is the initial state, which is initialized as a string $s_0s_0\dots s_0$ of length $\tau$;
    \item $A$ is the action space, which is the same as $M$; 
    \item $P^\tau: \Sigma^\tau\times A \times\Sigma^\tau\to [0,1] $ is the transition probability function such that, 
      for any $s^\tau_t,s^\tau_{t+1}\in \Sigma^{\tau},a_t\in A$, 
      if $s^\tau_{t+1}(i) = s^\tau_t(i+1), \forall i\in\{0,1,\cdots,\tau-2\}$, where $s^\tau_{t}(i)$ denotes the $i$th element in $s^\tau_{t}$,
      then we have
      $
      P^\tau(s^\tau_t,a_t,s^\tau_{t+1}) =  P(s^\tau_t(\tau-1),a_t,s^\tau_{t+1}(\tau-1)).
      $
      Otherwise, $P^\tau(s^\tau_t,a_t,s^\tau_{t+1})=0$;
        \item 
        $R^\tau:\Sigma^\tau \rightarrow \mathbb{R}$ is the reward function defined over the $\tau$-MDP.
    \end{itemize}
\end{definition}

Intuitively, the $\tau$-MDP ``unfolds" the original MDP by retaining  the last $\tau$ visited states. 
That is, each $s^\tau_t$ denotes the $\tau$-step history, i.e., $s^\tau_t = \s_{t-\tau+1:t}$, where $\tau$ is determined by the task.
For a general STL formula, we need to choose $\tau$ as the entire horizon of the formula to gather enough information for robustness computation. 
Yet, for formulae $\F_{[0,H)}\phi $ or $\G_{[0,H)}\phi$, we only need $\tau = \hrz(\phi) + \delta$. 
Specifically, we need  $\hrz(\phi)$-step information to determine the satisfaction of the internal sub-formula $\phi$, and an additional $\delta$-step information to determine the temporal robustness. 

\subsection{Approximation of Robust Probability}

Before addressing Problem \ref{problem:1}, where our objective is to maximize the probability of being temporally robust to a time shift threshold, we first extend the STL semantic to delayed partial signals. 

\begin{definition}
    \textbf{(Satisfaction of Delayed Signal):} Given specification $\Phi$, $\tau$-state $s^\tau_t$, and a small delay amount $d<\delta$, we denote the delayed (right shifted) signal trace as $\s_{t-h-d+1:t-d}$, where $h=\hrz(\phi)$. The satisfaction of delayed signal is denoted as follows: 
    \begin{IEEEeqnarray}{lCr}
        \text{sat}(\phi, s^\tau_t ,d)=\left[\s_{t-h-d+1:t-d} \models \phi'\right],
    \end{IEEEeqnarray} 
    where $\phi'$ is obtained from  $\phi$ by right-shifting the effective time window by $t-h-d+1$ steps, e.g., for $\phi=\G_{[0,h)} (s< C)$, we have $\phi'=\G_{[t-h-d+1,t-d)}(s<C)$.
    In another word, the satisfaction of delayed signal is determined by whether every (or at least) one state on trace $\s_{t-h-d+1:t-d}$ satisfies the inner sub-formula $\phi'$ by Boolean semantic (\ref{eqn:boolean}). With a slight abuse of notation, hereafter, sub-formulae are by default shifted according to the elapsed time $t$ and thus we do not differentiate $\phi$ and $\phi'$.
\end{definition}

Having defined satisfaction of delayed signals, we can now formally qualify the concept of being temporally robust. 

\begin{definition}\label{def:being}
    \textbf{($\delta$-Temporally-Robust):} Given specification $\Phi$, $\tau$-state $s^\tau_t$ and temporal robustness lower bound $\delta$, 
    we  say $\tau$-state $s^\tau_t$ is $\delta$-temporally-robust if  $s^\tau_t$ satisfies the specification $\Phi$ for any delay $d\in[0,\delta]$, and we denote
    \begin{IEEEeqnarray}{lCr}
        \text{rb}(\Phi ,s_{t}^{\tau },\delta)=
        \left\{ \begin{matrix}
            1 & \text{if}\underset{0\le d\le \delta }{\mathop{\min }}\,(\text{sat}(\Phi ,s_{t}^{\tau },d))=1  \\
            0 & \text{otherwise}  \\
         \end{matrix} \right.
    \end{IEEEeqnarray}
\end{definition}

\begin{proposition} The temporal robustness guarantee of signal $\s_{0:T}$ w.r.t.\ the overall formula $\Phi$ can be determined by the temporal robustness guarantee of partial signal $s^\tau_t$ w.r.t the sub-formula $\phi$, i.e., the following equation holds: 
\begin{equation}
\small
\begin{array}{rl}
    &I(\theta(\Phi,\s_{0:T})\geq\delta)=\\
        &\quad\quad\quad\quad \left\{
            \begin{array}{ll}
                \max\limits_{t=0:T}\left(\text{rb}(\phi,s^\tau_t,\delta)\right) & \text{if } \Phi=\F_{[0,H)}\phi\\
                \min\limits_{t=0:T}\left(\text{rb}(\phi,s^\tau_t,\delta)\right) & \text{if } \Phi=\G_{[0,H)}\phi
            \end{array}
            \right.
\end{array}
\end{equation}
\end{proposition}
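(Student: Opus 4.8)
The plan is to expand the indicator $I(\theta(\Phi,\s_{0:T})\ge\delta)$ straight from the definition of (right) temporal robustness and then push the outer operator ($\F$ or $\G$) through the ``for all delays'' quantifier until it meets the per-window predicate $\text{rb}(\phi,s^\tau_t,\delta)$. First I would reduce the threshold condition to a family of delayed satisfactions: since $\delta>0$, the inequality $\theta(\Phi,\s,0)=\mathcal{X}(\Phi,\s,0)\cdot\max\{d\mid\cdots\}\ge\delta$ forces $\mathcal{X}(\Phi,\s,0)=+1$, and by the definition of $\theta^+$ it is then equivalent to demanding $\mathcal{X}(\Phi,\s,-d)=+1$ for every $d\in[0,\delta]$, i.e.\ the $d$-delayed (right-shifted) signal still satisfies $\Phi$ for all $d\le\delta$. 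Using the fill-in convention for negative indices, each $\mathcal{X}(\Phi,\s,-d)$ becomes the Boolean satisfaction of a delayed partial signal, so the top-level quantity is a conjunction over $d$ of delayed satisfactions of $\Phi$.

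Next I would distribute the outer operator through the delay. By the quantitative semantics, $\mathcal{X}(\F_{[0,H)}\phi,\s,-d)=+1$ iff there exists a $t''$ in the (shifted) outer window with $\s[t'']\models\phi$, while $\mathcal{X}(\G_{[0,H)}\phi,\s,-d)=+1$ iff every such $t''$ satisfies $\phi$. Re-indexing each inner window by its $\tau$-state $s^\tau_t$ and the delay $d$, and invoking $\tau=\hrz(\phi)+\delta$ so that $s^\tau_t$ carries exactly the information needed, turns the atom ``$\s[t'']\models\phi$'' into $\text{sat}(\phi,s^\tau_t,d)$; collecting over $d$ gives $\text{rb}(\phi,s^\tau_t,\delta)=\min_{0\le d\le\delta}\text{sat}(\phi,s^\tau_t,d)$ exactly as in Definition~\ref{def:being}.

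The remaining step is a quantifier interchange. For $\Phi=\G_{[0,H)}\phi$ the condition now reads ``for all $d$ and for all $t$, $\phi$ holds,'' two universal quantifiers that commute freely; merging them yields $\min_{0\le t\le T}\text{rb}(\phi,s^\tau_t,\delta)$ and closes the $\G$ case almost immediately (I would only check that the union of the shifted windows over $d\in[0,\delta]$ together with the initial fill-in states matches precisely the index range swept by $t$). For $\Phi=\F_{[0,H)}\phi$ the quantifier pattern is instead ``for all $d$, there exists $t$,'' which I must upgrade to ``there exists $t$, for all $d$,'' i.e.\ $\max_{0\le t\le T}\text{rb}(\phi,s^\tau_t,\delta)$.

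The hard part will be exactly this $\F$ interchange, and it is where I would concentrate my effort. The inclusion $(\Leftarrow)$ is routine: if some $\tau$-state is $\delta$-temporally-robust then that single witness window keeps $\F$ satisfied under every delay $d\le\delta$. The delicate inclusion is $(\Rightarrow)$, since a ``for all delays, there exists a witness'' statement does not in general collapse to a single persistent witness. To salvage it I would attempt a monotonicity/contiguity argument on the set of satisfying inner-window positions inside the outer horizon, trying to show the per-delay witnesses can be chosen to form one block that survives the whole range $[0,\delta]$, with careful treatment of the boundary near $t=0$ where the fill-in states enter. Because this is precisely the kind of existential/universal swap that can silently fail, I would first stress-test it on a minimal instance (e.g.\ a single predicate $\phi=(s<C)$ with $\hrz(\phi)=1$ and a small $H$, $\delta$) to confirm the two sides agree before committing to the general argument; any discrepancy there would indicate the $\F$ statement needs an extra hypothesis or a reformulation of $\text{sat}$.
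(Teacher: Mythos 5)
Your proposal is incomplete exactly where you said the difficulty lies, and your suspicion is justified: the $\forall d\,\exists t \Rightarrow \exists t\,\forall d$ interchange in the $\F$ case is left as a plan (``I would attempt a monotonicity/contiguity argument''), not a proof, and no contiguity argument can close it, because the step genuinely fails for the shift-based (Donz\'e-style) temporal robustness of Definition~2. Concretely, take $\Phi=\F_{[0,H)}\phi$ with atomic $\phi=(s\in A)$, $h=\hrz(\phi)=1$, $\delta=1$, $H\geq 3$, and a signal with $s_1\in A$ and $s_t\notin A$ for all $t\neq 1$ (so the fill-in value $s_0$ is also outside $A$). For every delay $d\leq 1$ the delayed signal still satisfies $\Phi$ --- the witness simply migrates from $t=1$ to $t=1+d$ --- so $\mathcal{X}(\Phi,\s,0)=\mathcal{X}(\Phi,\s,-1)=1$ and $I(\theta(\Phi,\s_{0:T})\geq\delta)=1$. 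But $\text{rb}(\phi,s^\tau_t,1)=1$ would require $s_t\in A$ and $s_{t-1}\in A$ simultaneously, which never happens, so $\max_{0\leq t\leq T}\text{rb}(\phi,s^\tau_t,\delta)=0$. Only your ``routine'' direction $\max_t \text{rb}(\phi,s^\tau_t,\delta)\leq I(\theta(\Phi,\s_{0:T})\geq\delta)$ holds unconditionally (a single persistent witness survives all shifts), so the surrogate is a sound under-approximation rather than an equality; your proposed stress-test on a minimal instance would have surfaced exactly this discrepancy. The $\G$ case you dispatch by commuting two universal quantifiers is correct and matches the truth of the matter.

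The paper's own proof takes a different route that absorbs, rather than resolves, this swap: it invokes from the cited reference the identities $\theta(\phi_1\lor\phi_2,\s_{0:T})=\max(\theta(\phi_1,\s_{0:T}),\theta(\phi_2,\s_{0:T}))$ and the dual $\min$-identity for $\land$, unrolls $\F_{[0,H)}\phi$ as a disjunction over time to get $\theta(\Phi,\s_{0:T})=\max_{t}\theta(\phi,\s_{t:t+h-1})$, and then pushes the indicator through the $\max$. The entire existential/universal interchange you isolated is hidden inside the $\lor$-identity, which holds by construction for the \emph{recursively defined} temporal robustness of that reference but only as the inequality $\theta(\phi_1\lor\phi_2,\s_{0:T})\geq\max(\theta(\phi_1,\s_{0:T}),\theta(\phi_2,\s_{0:T}))$ under the shift-based Definition~2 used here (the same alternating-witness example applies). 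So the comparison is: the paper's decomposition is shorter and yields the $\G$ case cleanly, at the cost of leaning on a lemma that is exact only for a different robustness semantics; your first-principles unfolding is more honest about where the claim is fragile but does not deliver the $\F$ direction, which in fact should be weakened to a one-sided bound (still sufficient to justify the learned objective as a lower bound on the true robust-satisfaction probability).
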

\vspace{5pt}
\begin{proof}
    For disjunction and conjunction of sub-formulae, we have the following properties in terms of temporal robustness; see \cite{rodionova2021time}:
    \begin{equation}
         \begin{array}{c}
        \theta(\phi_1\lor\phi_2,\s_{0:T})=\max\left(\theta(\phi_1,\s_{0:T}),\theta(\phi_2,\s_{0:T})\right)\\
        \theta(\phi_1\land\phi_2,\s_{0:T})=\min\left(\theta(\phi_1,\s_{0:T}),\theta(\phi_2,\s_{0:T})\right) 
        \end{array}
    \end{equation}
For STL formula $\Phi=\F_{[0,H)}\phi$, we use the operator ``$\lor$" to break down the formula, and since $\theta(\phi,\s_{t:t+h-1})\geq \delta, \forall t\in[0,T]$, we have
    \begin{IEEEeqnarray}{lCr}
\vspace{-5pt}        \begin{array}{l}
            \theta(\Phi,\s_{0:T})=\max\limits_{t=0:T}(\theta(\phi,\s_{t:t+h-1})).
        \end{array}
    \end{IEEEeqnarray}
Since $I(\max\limits_{t=0:T}(\theta(\phi,\s_{t:t+h-1}))\geq \delta)$ is equivalent to $\max\limits_{0:T}(\text{rb}(\phi,s^\tau_t,\delta))$, we obtain that $$I(\theta(\Phi,\s_{0:T})\geq\delta)=\max\limits_{t=0:T}\left(\text{rb}(\phi,s^\tau_t,\delta)\right).$$ 
Proof is similar for $\Phi=\G_{[0,H)}\phi$, as $\min\limits_{t=0:T}\left(\text{rb}(\phi,s^\tau_t,\delta)\right)$ is equivalent to \small $I(\min\limits_{t=0:T}(\theta(\phi,\s_{t:t+h-1}))\geq \delta)$.
\end{proof}

We have established the relation between the temporal robustness of entire trajectory w.r.t.\ the overall specification $\Phi$ and the temporal robustness of partial trajectory w.r.t.\ the inner sub-formula $\phi$.
In \cite{aksaray2016q}, the author approximated and decomposed the objective function using the LSE (log-sum-exp) method into a sum of step rewards. Here we use the similar philosophy in reformulating our problem. The LSE, also known as  a smooth approximation to the maximum function, is defined as follows:
\begin{IEEEeqnarray}{lCr}\label{eqn:LSE}
    \begin{array}{lll}
        \max\left(x_1,\cdots,x_n\right)&\approx &\frac{1}{\beta}\log\sum\limits_{i=1}^{n}e^{\beta x_i}.
    \end{array}
\end{IEEEeqnarray}
The approximation is bounded by the following inequalities:
\begin{IEEEeqnarray}{lCr}
    \begin{array}{ll}
        \max(x_1,\dots,x_n)&\leq \frac{1}{\beta}\log \sum\limits_{i=1}^n e^{\beta x_i}\\
        &\leq \max(x_1,\dots,x_n)\!+\!\frac{1}{\beta}\log n .
    \end{array}
    \label{eqn:bound}
\end{IEEEeqnarray}
The same can be derived for $\min\left(x_1,\cdots,x_n\right)=-\max\left(-x_1,\cdots,-x_n\right)=\approx -\frac{1}{\beta}\log\sum\limits_{i=1}^{n}e^{-\beta x_i}$. Clearly, increasing $\beta$ will narrow the gap between the approximated value and the actual value. 
 
\renewcommand\theproblem{\arabic{problem}A}
\setcounter{problem}{0}
\begin{problem}
    \textbf{(Maximizing Approximated Probability of Being Temporally Robust):} Consider an MDP  $M=\left\langle \Sigma ,s_0,A,P,R \right\rangle $ with unknown $P$, given an STL specification $\Phi$, find a control policy $\pi_{1A}^*$ that maximizes the following objective function:  \label{problem:approx} 
\begin{equation}
    \small{
    \pi^*_{1A}=\left\{\begin{array}{ll}
        \arg\max\limits_{\pi} \mathbb{E}[\sum\limits_{t=0}^{T}e^{\beta\cdot \text{rb}(\phi,s^\tau_t,\delta)}] & \text{if } \Phi=\F_{[0,H)}\phi \\
        \arg\max\limits_{\pi} \mathbb{E}[\sum\limits_{t=0}^{T}-e^{-\beta\cdot \text{rb}(\phi,s^\tau_t,\delta)}] & \text{if } \Phi=\G_{[0,H)}\phi
    \end{array}\right.
    }
\end{equation}

\end{problem}


The following result  shows that the reformulated problem can be arbitrarily close to the original problem. 
\begin{proposition}When $\beta$ goes to infinity, the optimal policy to the approximated problem $\pi^*_{\ref{problem:approx}}$ will converge to the optimal policy to the original problem $\pi^*_{\ref{problem:1}}$. 
\end{proposition}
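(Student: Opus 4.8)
The plan is to prove the claim by showing that the objective of the approximated problem, suitably normalized, converges \emph{uniformly over the policy space} to the objective of the original problem, and then to transfer this to the maximizers by a standard argmax-stability argument. I will carry out the $\F$-case; the $\G$-case is identical after the substitution $\min(x_i)=-\max(-x_i)$ already used in the text.

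First, I would rewrite both objectives in a common form. By Proposition~1, the original objective is $J(\pi):=\mathbb{E}^\pi[I(\theta(\Phi,\s_{0:T})\ge\delta)]=\mathbb{E}^\pi\big[\max_{0\le t\le T}\text{rb}(\phi,s^\tau_t,\delta)\big]$, so $\pi_1^*=\arg\max_\pi J(\pi)$. Alongside it I introduce the smoothed surrogate $\tilde J_\beta(\pi):=\mathbb{E}^\pi\big[\tfrac1\beta\log\sum_{t=0}^T e^{\beta\,\text{rb}(\phi,s^\tau_t,\delta)}\big]$, which is the pathwise log-sum-exp relaxation of $J$.

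Second, I would establish uniform convergence $\tilde J_\beta\to J$. Applying the pointwise bound \eqref{eqn:bound} to each realized $\tau$-trajectory gives $\max_t\text{rb}\le\tfrac1\beta\log\sum_t e^{\beta\,\text{rb}}\le\max_t\text{rb}+\tfrac1\beta\log(T+1)$; taking $\mathbb{E}^\pi$ term by term yields $J(\pi)\le\tilde J_\beta(\pi)\le J(\pi)+\tfrac1\beta\log(T+1)$ for every $\pi$. Hence $\sup_\pi|\tilde J_\beta(\pi)-J(\pi)|\le\tfrac1\beta\log(T+1)\to0$ as $\beta\to\infty$. Because the stationary policies form a compact set (a product of simplices) and both $J$ and $\tilde J_\beta$ are continuous---indeed polynomial---in the entries of $\pi$ (the finite-horizon trajectory law depends continuously on $\pi$ and the integrand is bounded), uniform convergence plus compactness forces every limit point of the surrogate maximizers to lie in $\arg\max_\pi J$, and if $\pi_1^*$ is the unique maximizer this gives convergence to $\pi_1^*$. (Restricting to deterministic policies makes the set finite and the argument even more elementary.)

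The main obstacle is the final link between $\tilde J_\beta$ and the objective actually optimized in the approximated problem, $\hat J_\beta(\pi):=\mathbb{E}^\pi[\sum_t e^{\beta\,\text{rb}}]$, whose additive, Markovian form is what makes Q-learning applicable. Since $x\mapsto\tfrac1\beta\log x$ is strictly increasing, $\arg\max_\pi\hat J_\beta=\arg\max_\pi\tfrac1\beta\log\mathbb{E}^\pi[\sum_t e^{\beta\,\text{rb}}]$, so everything reduces to controlling the Jensen gap between $\tfrac1\beta\log\mathbb{E}^\pi[\,\cdot\,]$ and $\mathbb{E}^\pi[\tfrac1\beta\log(\cdot)]=\tilde J_\beta(\pi)$. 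This interchange of the expectation with the nonlinear LSE is the delicate step: using $\text{rb}\in\{0,1\}$ one finds $\sum_t e^{\beta\,\text{rb}}=(T+1)+(e^\beta-1)\sum_t\text{rb}$, so $\hat J_\beta$ in fact ranks policies by the expected \emph{count} $\mathbb{E}^\pi[\sum_t\text{rb}]$, whereas $J$ ranks them by the hitting \emph{probability} $\mathbb{E}^\pi[\max_t\text{rb}]=\Pr^\pi[\sum_t\text{rb}\ge1]$. I therefore expect the crux of the proof to be showing that these two orderings agree in the $\beta\to\infty$ limit---either by arguing through the smoothed surrogate $\tilde J_\beta$ rather than the raw additive reward, or by invoking an additional structural assumption under which maximizing the expected number of robust steps also maximizes the probability of at least one.
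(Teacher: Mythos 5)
Your argument is rigorous exactly up to the point you yourself flag, and the obstruction you identify there is a genuine gap --- but note that it is a gap you share with the paper rather than one you introduced. The paper's own proof is a two-line invocation of the LSE sandwich \eqref{eqn:bound}, yielding the chain $\Pr^{\pi_1^*}(\theta\ge\delta)\le\Pr^{\pi_{1A}^*}(\theta\ge\delta)\le\Pr^{\pi_1^*}(\theta\ge\delta)+\tfrac1\beta\log(T+1)$, which is doubly problematic: the first inequality is backwards (by definition $\pi_1^*$ maximizes the probability, so $\Pr^{\pi_{1A}^*}\le\Pr^{\pi_1^*}$), and the bound \eqref{eqn:bound} is a \emph{pathwise} statement relating $\tfrac1\beta\log\sum_t e^{\beta\,\text{rb}(\phi,s^\tau_t,\delta)}$ to $\max_t \text{rb}(\phi,s^\tau_t,\delta)$ --- i.e.\ it relates your surrogate $\tilde J_\beta$ to $J$, not the objective $\hat J_\beta(\pi)=\mathbb{E}^\pi[\sum_t e^{\beta\,\text{rb}}]$ that Problem 1A actually optimizes. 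The paper silently interchanges the expectation with the nonlinear LSE; your uniform-convergence-plus-compactness treatment of $\tilde J_\beta\to J$ is the correct and clean version of what the paper only gestures at.

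The remaining step cannot be closed as stated, and your affine identity shows why decisively: since $\text{rb}\in\{0,1\}$, one has $\sum_t e^{\beta\,\text{rb}_t}=(T+1)+(e^\beta-1)\sum_t\text{rb}_t$, hence $\arg\max_\pi\hat J_\beta=\arg\max_\pi\mathbb{E}^\pi[\sum_t\text{rb}_t]$ for \emph{every} $\beta>0$ --- the maximizer of the approximated problem is constant in $\beta$, so the limit $\beta\to\infty$ does nothing to it, and the proposition reduces to the claim that the expected-count maximizer coincides with the hitting-probability maximizer. That claim is false in general: if one initial action yields exactly one robust step surely while another yields two robust steps with probability $0.6$ and none otherwise, the former maximizes $\Pr[\max_t\text{rb}_t=1]$ ($1$ vs.\ $0.6$) while the latter maximizes $\mathbb{E}[\sum_t\text{rb}_t]$ ($1.2$ vs.\ $1$). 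So the proposition, read literally for the objective of Problem 1A, does not hold without further assumptions; it holds for the pathwise surrogate $\tilde J_\beta$ you analyze, and your proposed repairs --- either redefining the approximated problem through $\tilde J_\beta$, or imposing a structural condition under which expected count and hitting probability induce the same ordering --- are the right ones. No tightening of the LSE bound alone can bridge the Jensen gap you isolated.
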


\begin{proof}
The proposition can be manifested from Equation~\eqref{eqn:bound}, which leads to:
    \begin{equation}\small
        \begin{array}{l}
         \text{Pr}^{\pi_1^*}(\theta(\Phi,\s_{0:T})\geq\delta)\leq
        \text{Pr}^{\pi_{1A}^*}(\theta(\Phi,\s_{0:T}) \geq \delta)\\\leq \text{Pr}^{\pi_1^*}(\theta(\Phi,\s_{0:T})\geq \delta)+\frac{1}{\beta} \log (T+1).
    \end{array}
    \vspace{-10pt}
    \end{equation}
\end{proof}

Note that, when $\beta$ is too large, the objective function will become less smooth, which prolonging the learning convergence\cite{sutton2018reinforcement}. 
Therefore, we select a reasonably large $\beta$ in the experiments.

Now Problem \ref{problem:approx} is in the standard form of reinforcement learning, depending on the type of the STL specification, the step reward is given as:
\begin{IEEEeqnarray}{lCr}\small
    r^\tau_{t} = R^\tau(s^\tau_{t+1}) = \left\{\begin{matrix}
        e^{\beta\cdot\text{rb}(\phi,s_{t+1}^\tau,\delta)} & \text{if } \Phi=\F_{[0,H)}\phi\\
        -e^{-\beta\cdot\text{rb}(\phi,s_{t+1}^\tau,\delta)} & \text{if } \Phi=\G_{[0,H)}\phi
    \end{matrix} \right..
\end{IEEEeqnarray}
\subsection{Maximizing  Spatial-Temporal Robustness}
Although our original purpose is to further consider the spatial robustness in addition to the temporal robustness requirement, the purported approach is also expected to expedite the learning process.  
This is because relying solely on Boolean satisfaction offers little insight into the quality of the current $\tau$-state.
For example, for $\Phi=\F_{[0,H)}\phi$, before reaching a satisfying $\tau$-state, the agent receives minimum reward, resulting in infrequent updates to the corresponding entries in the Q-table. Thus the learning process reduces to a Monte-Carlo search, which is far from efficient. However, incorporating spatial semantic can enhance our data efficiency.
To this end, we denote the objective function in Problem \ref{problem:worst} as $J_0$ and we obtain the following equivalent form regarding $\Phi=\F_{[0,H)}\phi$ and $\Phi=\G_{[0,H)}\phi$  
\begin{equation}\label{eqn:j0}
    \small{
    J_0=\left\{
        \begin{array}{ll}
            \mathbb{E}[ \min\limits_{d\leq \delta} \{\max\limits_{t}\{ \rho(\phi,s^\tau_t,d)\}\}] & 
            \text{if } \Phi=\F_{[0,H)}\phi\\
            \mathbb{E}[ \min\limits_{d\leq \delta} \{\min\limits_{t}\{ \rho(\phi,s^\tau_t,d)\}\}] &
             \text{if } \Phi=\G_{[0,H)}\phi  
        \end{array}
    \right.
    }
\end{equation}

Using the same LSE technique to decompose the \texttt{max} and \texttt{min} operator in $J_0$, we obtain the approximated objective function $J_1$ by 

\begin{equation}\vspace{-5pt}
\small{
    J_1 = \left\{
        \begin{array}{ll}
            \mathbb{E}[\min\limits_{d\leq \delta} \{\sum\limits_{t=0}^T e^{\beta\rho(\phi,s_t^\tau,d)}\}] &\text{if } \Phi=\F_{[0,H)}\phi\\
            \mathbb{E}[\min\limits_{d\leq \delta} \{\sum\limits_{t=0}^T -e^{-\beta\rho(\phi,s_t^\tau,d)}\}] &\text{if } \Phi=\G_{[0,H)}\phi  
        \end{array}
    \right. }
    \vspace{-5pt}
\end{equation}

Note that the above objective function $J_1$ is still not in the additive form, to further obtain the instant reward at each step, we define the objective function $J_2$ by

\begin{equation}\label{eqn:j2}\small{
J_2 = \left\{
        \begin{array}{ll}
            \mathbb{E}[\sum\limits_{t=0}^{T}e^{\beta\cdot \min\limits_{d\leq\delta}\{\rho(\phi,s_{t}^{\tau},d)\}}] &\text{if } \Phi=\F_{[0,H)}\phi\\
            \mathbb{E}[\sum\limits_{t=0}^{T}-e^{-\beta\cdot \min\limits_{d\leq\delta}\{\rho(\phi,s_{t}^{\tau},d)\}}] &\text{if } \Phi=\G_{[0,H)}\phi  
        \end{array}
    \right.}
    \vspace{-5pt}
\end{equation}

We use the objective function $J_2$ to formulate the approximated problem as follows. 
\setcounter{problem}{1}
\begin{problem}
    \textbf{(Maximizing Approximated Expected Spatial-Temporal Robustness):}  \label{problem:approx2}
    Consider an  MDP $M=\left\langle \Sigma ,s_0,A,P,R \right\rangle $ with unknown $P$, given an STL specification $\Phi$, find a control policy $\pi_{2A}^*$ that maximize the objective function $J_2$.
\end{problem}

Specifically, the step reward is given by   
\vspace{-3pt}
\begin{IEEEeqnarray}{lCr}\small
    R^\tau =\left\{ \begin{array}{ll}
        e^{\beta\cdot \min\limits_{d\leq\delta}\{\rho(\phi,s_{t}^{\tau},d)\}} & \text{if } \Phi=\F_{[0,H)}\phi \\
        -e^{-\beta\cdot \min\limits_{d\leq\delta}\{\rho(\phi,s_{t}^{\tau},d\}} & \text{if } \Phi=\G_{[0,H)}\phi
    \end{array}\right. 
\end{IEEEeqnarray}
The following result shows that the optimal value of the approximated Problem~\ref{problem:approx2} provides a lower bound for the optimal value of the original  Problem~\ref{problem:worst}.

\begin{proposition}
    Maximizing the objective function of Problem~\ref{problem:approx2} maximizes the objective function of Problem~\ref{problem:worst}.
\end{proposition}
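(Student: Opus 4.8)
The plan is to prove the claim through the two-stage approximation chain $J_0 \to J_1 \to J_2$ already set up in \eqref{eqn:j0}--\eqref{eqn:j2}, showing that $J_2$ is a lower bound on $J_0$ whose gap is controlled by $\beta$ and vanishes as $\beta \to \infty$, so that driving $J_2$ up drives up a guaranteed lower bound on the true objective. It suffices to treat the case $\Phi = \F_{[0,H)}\phi$ in detail; the case $\Phi = \G_{[0,H)}\phi$ is symmetric, since replacing $e^{\beta(\cdot)}$ by $-e^{-\beta(\cdot)}$ turns every inner $\max$ into a $\min$ while preserving monotonicity, so the same inequalities carry over with $\min_d\min_t$ in place of $\min_d\max_t$. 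Throughout I abbreviate $\rho_{t,d} := \rho(\phi, s^\tau_t, d)$ and write $n = T+1$ for the number of summands over $t$.

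First I would establish the inner inequality $J_2 \le J_1$, which is the step that swaps the $\min_{d}$ and the $\sum_t$. Fix any delay $d' \le \delta$. Because $\min_{d\le\delta}\rho_{t,d} \le \rho_{t,d'}$ for every $t$ and $x\mapsto e^{\beta x}$ is increasing, we have $e^{\beta\min_{d\le\delta}\rho_{t,d}} \le e^{\beta\rho_{t,d'}}$; summing over $t$ gives $\sum_t e^{\beta\min_{d\le\delta}\rho_{t,d}} \le \sum_t e^{\beta\rho_{t,d'}}$. Since this holds for every $d'$, the left side is a lower bound for the right side over all delays, hence $\sum_t e^{\beta\min_{d\le\delta}\rho_{t,d}} \le \min_{d\le\delta}\sum_t e^{\beta\rho_{t,d}}$. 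Taking expectations preserves the inequality, so $J_2 \le J_1$.

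Next I would connect $J_1$ back to $J_0$ using the LSE bracket \eqref{eqn:bound}. For each fixed $d$ that bound gives $\max_t\rho_{t,d} \le \frac1\beta\log\sum_t e^{\beta\rho_{t,d}} \le \max_t\rho_{t,d} + \frac1\beta\log n$. Because this two-sided estimate holds uniformly over $d$, applying $\min_d$ throughout preserves it, yielding $\min_d\max_t\rho_{t,d} \le \frac1\beta\log\min_d\sum_t e^{\beta\rho_{t,d}} \le \min_d\max_t\rho_{t,d} + \frac1\beta\log n$. The left-hand quantity is exactly the integrand of $J_0$ in \eqref{eqn:j0}, so a monotone transform ($\frac1\beta\log$) of the $J_1$ integrand sandwiches the $J_0$ integrand with an error at most $\frac1\beta\log(T+1)$, which shrinks to zero as $\beta\to\infty$.

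Finally I would assemble the two stages: $J_2$ lower-bounds $J_1$, and $J_1$ brackets $J_0$ within $\frac1\beta\log(T+1)$, so maximizing $J_2$ raises a certified lower bound on $J_0$, and in the limit $\beta\to\infty$ the surrogate becomes tight and the two optimizers coincide. I expect the main obstacle to be precisely the $\min$--$\sum$ swap of the second paragraph: it is only a one-sided inequality, so $J_2$ equals $J_1$ only in the degenerate situation where a single worst-case delay is simultaneously worst at every time step, and in general one can claim a lower bound rather than exact equality. A secondary subtlety is that the LSE bracket is pointwise whereas the objectives carry an outer $\mathbb{E}[\cdot]$ together with the $\frac1\beta\log$; since $\log$ is concave, Jensen's inequality blocks converting the pointwise bracket into an exact statement relating $\mathbb{E}[\log(\cdot)]$ and $\log\mathbb{E}[\cdot]$. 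The honest conclusion is therefore the monotone/lower-bound relationship above, and I would state the result as: maximizing $J_2$ maximizes a lower bound on the objective of Problem~\ref{problem:worst} that converges to it as $\beta\to\infty$.
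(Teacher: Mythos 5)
Your proof follows essentially the same route as the paper's: the min--sum swap giving $J_2 \le J_1$ (the paper does this via $\underline{M}_t = \min_d M_t^d$, which is your argument in different notation), followed by the LSE bracket \eqref{eqn:bound} relating $J_1$ to $J_0$, concluding that maximizing $J_2$ maximizes a lower bound on $J_0$ whose gap $\frac{1}{\beta}\log(T+1)$ vanishes as $\beta \to \infty$. If anything you are more careful than the paper, which asserts $\frac{1}{\beta}\log J_1 \le J_0 + \frac{1}{\beta}\log(T+1)$ with the expectation already inside $J_0$ and $J_1$, silently passing over the pointwise-versus-expectation (Jensen) subtlety that you correctly flag and resolve by stating the honest lower-bound conclusion.
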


\begin{proof}
    For $\Phi = \F_{[0,H)}\phi$, we denote $e^{\beta\rho(\phi,s^\tau_t,d)}$ as $M^d_t$, where $d\in\{0,1,\dots,\delta\}$.  
    The approximated objective function $J_1$ can be written as
    \begin{equation}\label{eqn:j1}\small\vspace{-5pt}
        J_1 = \min\left\{\sum_{t=0}^{T}M_t^0,\sum_{t=0}^{T}M_t^1,\dots,\sum_{t=0}^{T}M_t^\delta\right\}.
    \end{equation}
    Also, the objective in Problem \ref{problem:approx2} can be written as
    \begin{equation}\small\vspace{-5pt}
         J_2 =\sum_{t=0}^{T} \min\left\{M_t^0,M_t^1,\dots,M_t^\delta\right\}.
    \end{equation}
   For each $t$, we set $\min\{M_t^0,\dots,M_t^\delta\}=\underline{M}_t$, and we have $M_t^d\geq \underline{M}_t,\forall d\in\{0,\dots,\delta\}$. 
   Plugging the inequalities into Equation~\eqref{eqn:j1}, we obtain $J_1\geq \min\{\sum_{t=0}^{T}\underline{M}_t,\dots,\sum_{t=0}^{T}\underline{M}_t\}=\sum_{t=0}^{T}\underline{M}_t=J_2$. 
   The equality only holds when $M^0_t=\dots=M^\delta_t$.  
    Furthermore, Equation~\eqref{eqn:bound} establishes the relation between $J_0$ and $J_1$, since $J_1$ is a direct LSE decomposition of $J_2$. Namely, we have $\frac{1}{\beta}\log J_1\leq J_0 +\frac{1}{\beta}\log(T+1)$. Thus $\frac{1}{\beta}\log J_2\leq J_0 +\frac{1}{\beta}\log(T+1)$ also holds as $J_2\leq J_1$, which means that $\frac{1}{\beta}\log J_2 - \frac{1}{\beta}\log(T+1)$ is a lower bound for the objective function value of the original Problem~\ref{problem:worst}. Thus maximizing $J_2$ maximizes the lower bound of $J_0$. The proof for $\Phi=\G_{[0,H)}\phi$ is analogous.
\end{proof}

\begin{figure*}[t!]
\centering
\vspace{-5pt}
\subfigure[]{\includegraphics[width=0.2\linewidth]{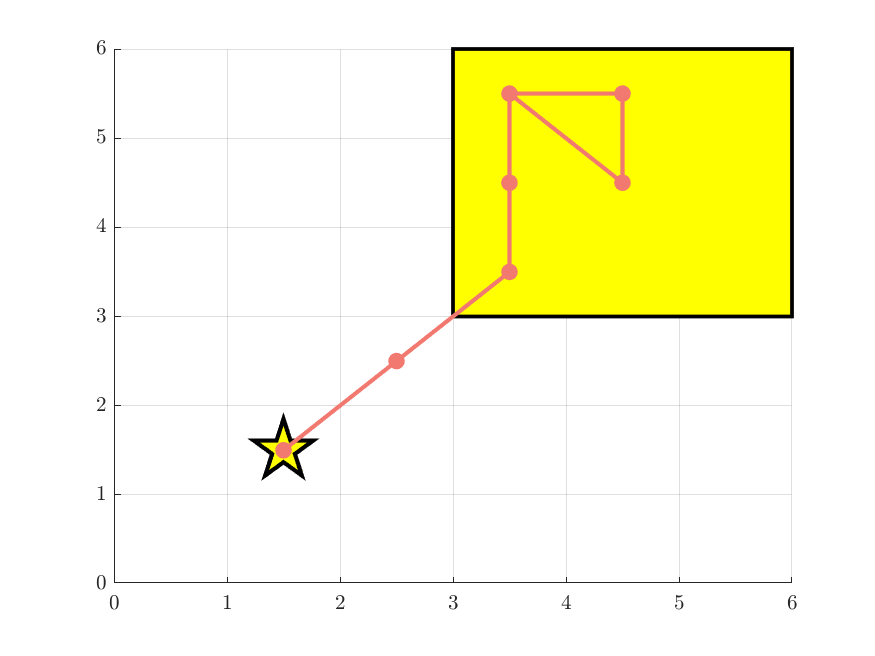}}
\subfigure[]{\includegraphics[width=0.2\linewidth]{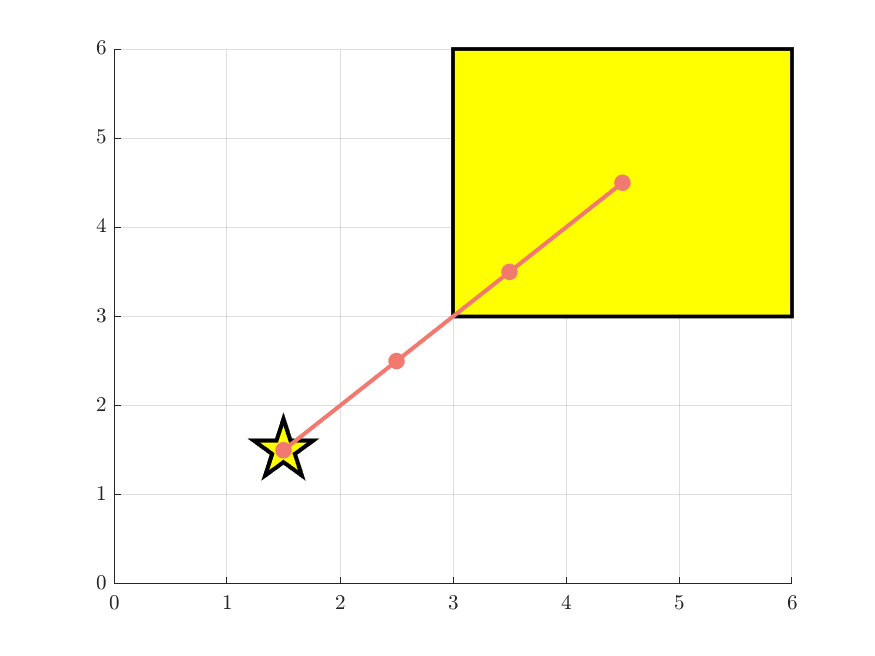}} 
\subfigure[]{\includegraphics[width=0.2\linewidth]{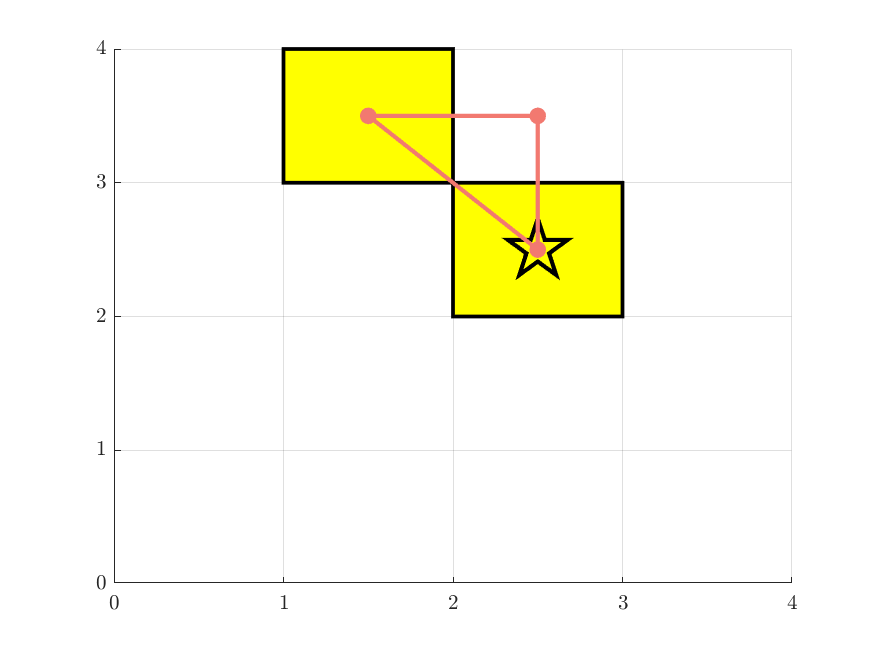}}
\subfigure[]{\includegraphics[width=0.2\linewidth]{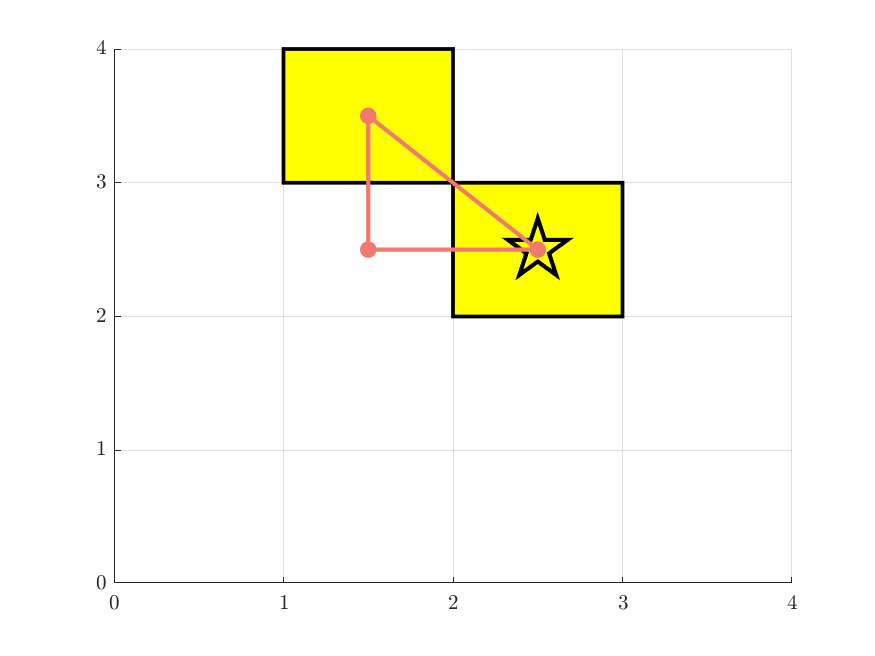}} \\
\vspace{-10pt}
\subfigure[]{\includegraphics[width=0.2\linewidth,trim=0 110 0 0,clip]{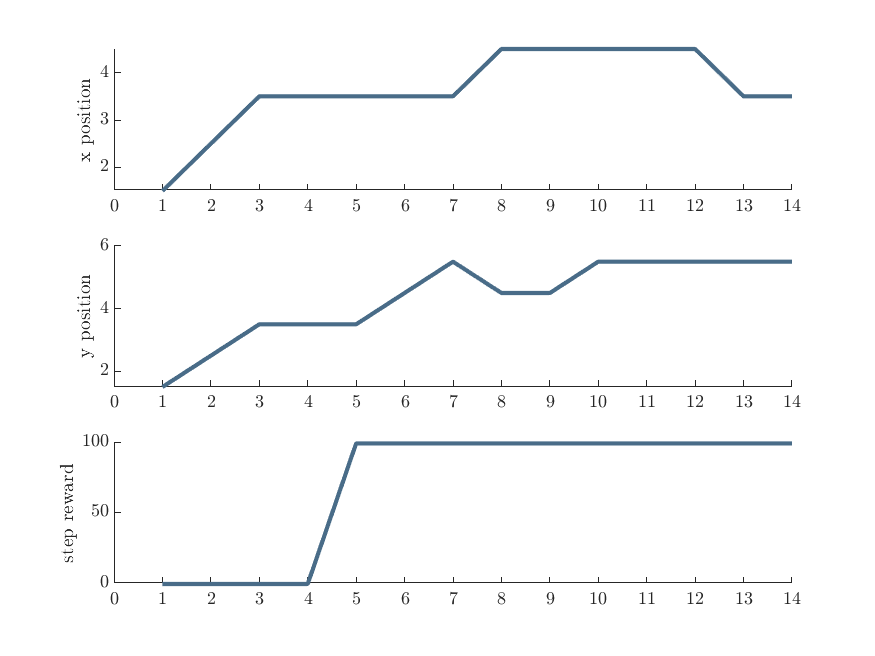}}
\subfigure[]{\includegraphics[width=0.2\linewidth,trim=0 110 0 0,clip]{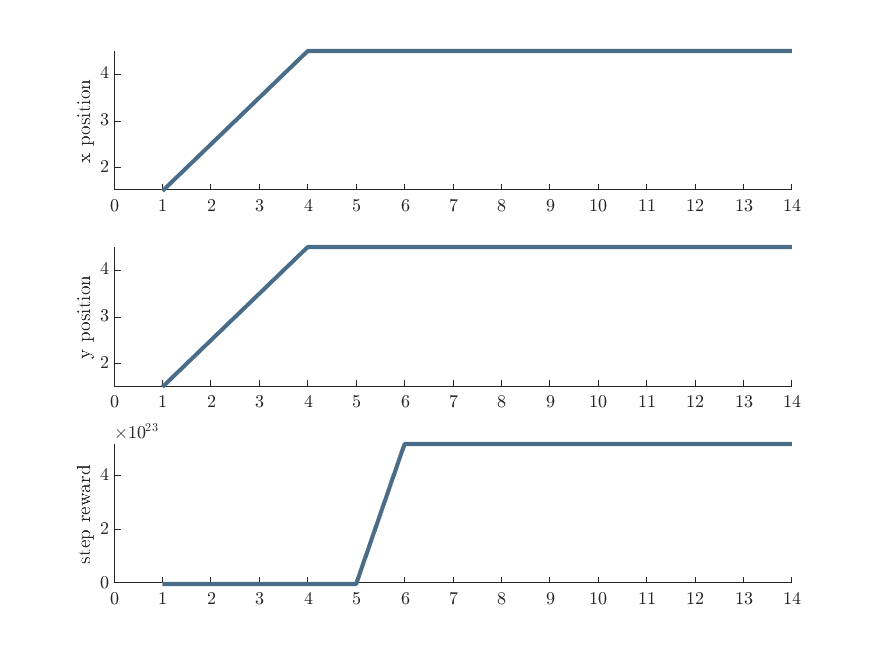}} 
\subfigure[]{\includegraphics[width=0.2\linewidth,trim=0 110 0 0,clip]{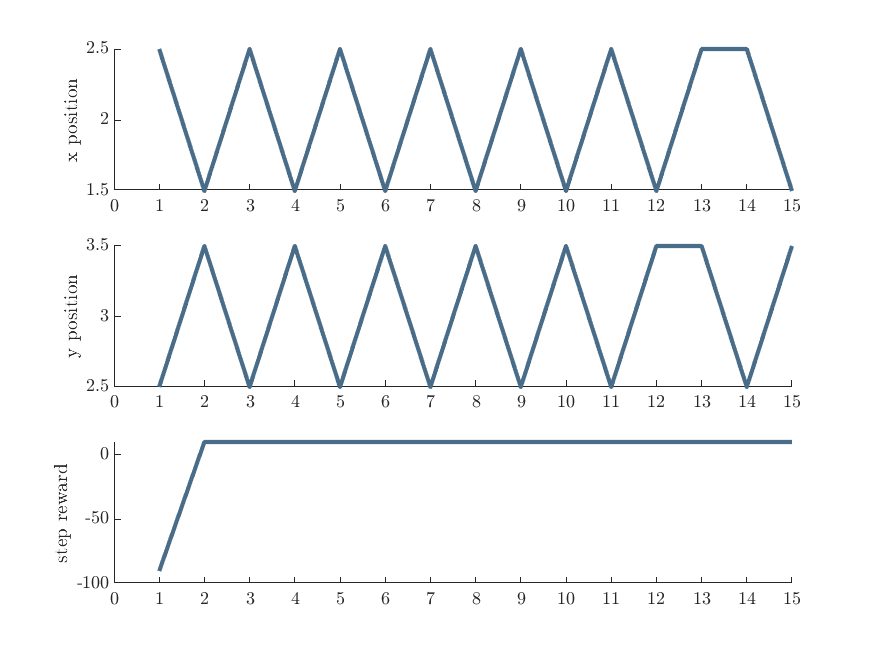}}
\subfigure[]{\includegraphics[width=0.2\linewidth,trim=0 110 0 0,clip]{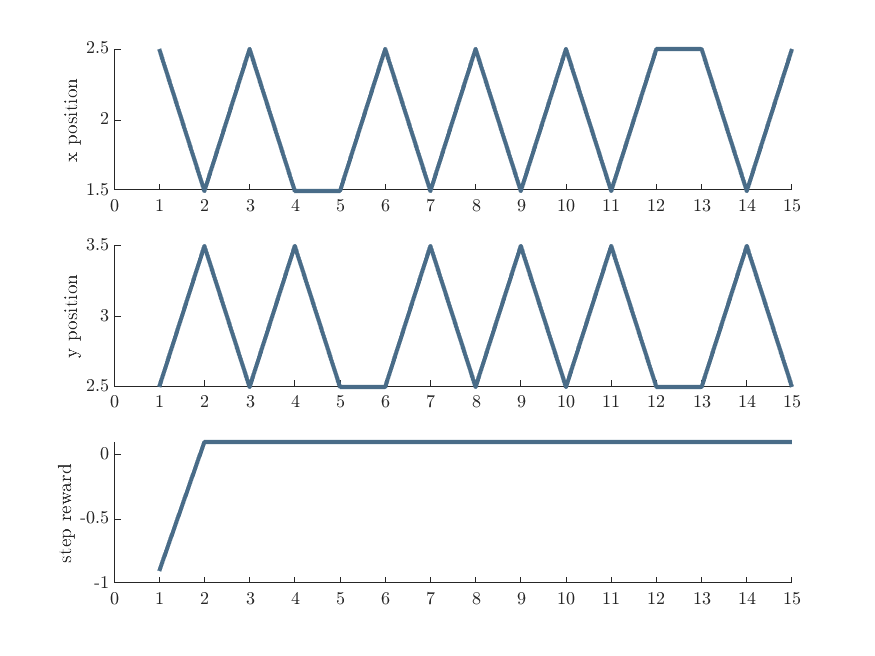}} \\
\vspace{-5pt}
\caption{(a)-(d) From left to right: Sample Trajectories generated by $\pi^*_{1A}$ for reachability task, $\pi^*_{2A}$ for reachability task, $\pi^*_{1A}$ for patrolling task
and $\pi^*_{2A}$ for patrolling task
\vspace{-12pt}
}
\label{fig:exp}
\end{figure*}

\section{Case Studies}\label{sec:Sim}

In this section, we illustrate the effectiveness of our algorithm by conducting a set of four simulation experiments\footnote{Sample videos and codes are available at \small{\url{https://github.com/WSQsGithub/TimeRobustLearning}}.}. 
All algorithms were implemented using \texttt{MATLAB} and simulated in \texttt{Coppeliasim} on a Windows 10 computer with an Intel Core i7-8550U 1.80GHz processor.

\textbf{System Descriptions: }
We consider a scenario, where a warehouse robot navigates in a manufacturing factory floor as depicted in Figure~\ref{fig:scene}. 
There are three functional areas in the workspace:  a storage area marked by the parcel icon and two loading stations marked by $A$ and $B$.
These areas will be used  later for specifying  tasks.
Depending on the specific task, the workspace is abstracted as $n\times n$ grid worlds shown in Figure~\ref{fig:exp}, where star marks the initial state and yellow regions mark the goal states. 
We consider the high-level decision-making problem on the abstracted grid world, where agent can choose to move to its adjacent grids or to stay for each decision time step.  If the agent chooses an action leading to collisions with the boundaries, then it will be forced to stay at the same grid. 
The high-level decisions from grid  to grid are then executed by a low-level hybrid controller that ensures collision avoidance along the way.  
The objective is to synthesize a control policy that  generates trajectories over grids   satisfying a given STL task.
\vspace{-10pt}
 \begin{figure} [H]
    \centering
    \includegraphics[width=0.40\linewidth]{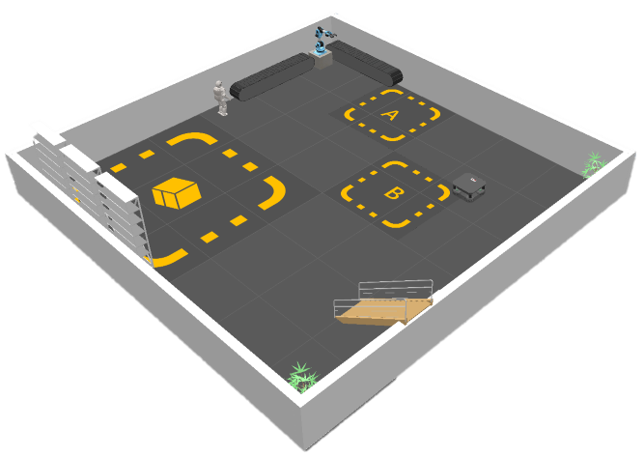}
    \caption{\small Aerial view of the factory floor}
    \label{fig:scene}
\end{figure}
\vspace{-12pt}

\textbf{Policy Learning Setups: }
We modify the standard tabular Q-learning algorithm to synthesize policies. To approximate reward functions, we choose $\beta=50$. The learning rate for the update is chosen as $\alpha=\max\{0.95\times0.999^i,0.0001\}$ at the $i$th episode and the discount factor is chosen as $\gamma=0.9999$. 
We settle for the policy after $10^4$ episodes of training. 

\textbf{Result Evaluations: }
Once a control policy is obtained, we evaluate it by generating 1000 simulation trajectories. 
Note that, we do not explicitly introduce time delays in the simulation, but evaluate each trajectory by its robustness to potential delays. 
For each trajectory, we compute the following four metrics by existing computation methods for STL: 
(i) Boolean satisfaction, (ii) {satisfaction with temporal robustness guarantees}, (iii) {spatial robustness value} and (iv) {temporal robustness value}. 
Then we compute the statistic values, including the average satisfaction rate $\text{Pr}(\s_{0:T}\models \Phi)$, the average time-robust satisfaction rate $\text{Pr}[\theta(\Phi,\s_{0:T})\leq \delta]$, the average spatial robustness $\bar{\rho}(\Phi,\s_{0:T})$ and the average temporal robustness $\bar{\theta}(\Phi,\s_{0:T})$ and summarize it in Table~\ref{tab:detail}. 
For each scenario, we pick one of the sample trajectories with the highest episodic reward for the purpose of demonstration as shown in Figure~\ref{fig:exp}. 

\begin{table}[H]
    \centering
    \caption{Experiment parameters and results}
    \adjustbox{max width=\linewidth}{
         \begin{threeparttable}
        \begin{tabular}{ccccccccc}
        \toprule
        Prob. & Task & Grid size & \#Q-entry  & $\delta$ & $\text{Pr}(\s_{0:T}\models \Phi)$&$\text{Pr}[\theta(\Phi,\s_{0:T})\geq \delta]$ &$\bar{\rho}(\Phi,\s_{0:T})$&$\bar{\theta}(\Phi,\s_{0:T})$\\
        \midrule
        1A & $\Phi_1$ & 6  & 12358 & 2 & 0.967 & 0.963 & 0.994 & 8.782\\
        2A & $\Phi_1$ & 6  & 13135 & 2 & 0.957 & 0.95 & 1.274 & 8.154\\
        1A & $\Phi_2$ & 4  & 4574 & 1 & 0.648 & 0.55 & 0.043 & -3.76\\
        2A & $\Phi_2$ & 4  & 4534 & 1 & 0.657 & 0.568 & 0.079 & -3.626\\
        \bottomrule
    \end{tabular}
    \end{threeparttable}
    }
    \label{tab:detail}
\end{table}\vspace{-20pt}
\begin{figure}[H] 
    \centering
    \includegraphics[width=0.4\linewidth,trim=20 20 20 30,clip]{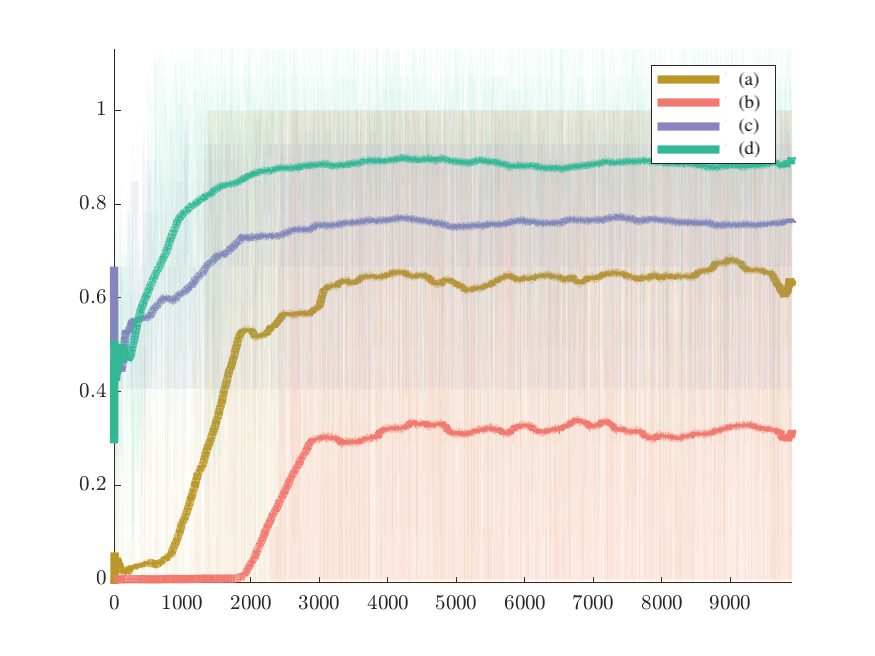}
    \caption{Learning curves of training the policies in Figure~\ref{fig:exp}(a-d), respectively}
    \label{fig:curve}
\end{figure}\vspace{-10pt} 

\textbf{Reachability Task: }
In the first case study, we consider a $6\times 6$ world and a reachability task described by\vspace{-2pt}
\[
\vspace{-2pt}\Phi_1 = \F_{[0,12)}\G_{[0,2)}(s\in Goal).
\] 
Specifically,  within $14$ steps, the agent needs to move to the storage area and stay for at least $2$ steps. 
We synthesize policies for both Problems~1A and 2A and show the sample trajectories in abstract space in Figures~\ref{fig:exp}(a) and (e) for Problem~1A and Figures~\ref{fig:exp}(b) and (f) for Problem~2A, respectively. 
The robot manages to find the shortest path to the goal region and stays there until the task ends even though the STL specification only requires it to stay there for $2$ steps.  Figure~\ref{fig:exp}(b) further shows that the robot is driven to the center of the goal region as a result of considering spatial robustness. In this case, the policy of Problem~\ref{problem:approx2} generates more spatially robust trajectories.

\textbf{Patrolling  Task: }
In the second case study, we consider a $4\times 4$ world and a patrolling task described by 
\[\vspace{-3pt}
\Phi_2 = \G_{[0,12)}\left[\F_{[0,3)}(s\in A)\land \F_{[0,3)}(s\in B)\right].
\]
Specifically, within $15$ time steps, the robot needs to visit regions $A$ and $B$ (two yellow grids) every $3$ time steps to load and unload workpieces. We still synthesize policies for both Problems~1A and 2A. 
Sample trajectories are provided in Figures~\ref{fig:exp}(c) and (g) for Problem~1A, and Figures~\ref{fig:exp}(d) and(h) for Problem~1B, respectively. 
For this task, since we take temporal robustness into account, the robot leaves the goal region immediately once the work piece is (un)loaded.
Compared with the reachability tasks, this patrolling task is more difficult to achieve since we need to satisfy the sub-task for the entire horizon, which explains the relatively low average temporal robustness compared to the reachability task, as indicated in Table~\ref{tab:detail}.

\textbf{Discussions: }
Table~\ref{tab:detail} shows that the performance distinction between policies from the two problem formulations is statistically insignificant. This is largely because that even though two problems are theoretically formulated differently, the shaped reward is numerically similar. Also, because $\beta$ has to be sufficiently large to approximate $\min$ or $\max$ operators, $e^{\beta x}$ becomes very large when $x>0$ and  close to $0$ otherwise. This scaling further narrows the gap between the two problems. Nevertheless, Figure~\ref{fig:exp} shows that the spatial robustness maximization drives the agent towards the goal region's center.

Figure~\ref{fig:curve} reveals intriguing insights from the learning curves. For the patrolling task, the potential of expediting learning through spatial robustness consideration is evident.  While for the reachability task, from Figure~\ref{fig:curve}, it seems that the optimum is later reached. 
However, this is because that the agent gets significantly higher reward at the central grid of the goal region than the others. The agent trained on Problem~\ref{problem:approx2} can already generate as good trajectories as the agent trained on Problem~\ref{problem:approx} before the latter policy reaches convergence. The fact that the central grid which generates the best reward when visited, is distant from the initial state, contributes to the late convergence as it takes longer to find the central grid.

\vspace{-5pt}
\section{Conclusion and Future Work} \label{sec:Con}
\vspace{-3pt}
In this work, we propose a novel reinforcement learning approach to enhance the temporal robustness of signal temporal logic tasks for unknown stochastic systems. 
We present two optimization problems to maximize temporal robustness probability and expected spatial-temporal robustness. 
Additionally, we provided approximation techniques that enable the application of standard Q-learning techniques.
Experimental results demonstrate the effectiveness of our proposed approach.
In the future, we plan to extend our results to more general fragments of STL tasks. 
We also aim to investigate how to incorporate the concept of asynchronous temporal robustness in reinforcement learning and to consider the case of continuous state and action space.

\clearpage
\bibliography{icra}
\bibliographystyle{ieeetr}
\end{document}